\newtheorem{theorem}{Theorem}
\newtheorem{prop}[theorem]{Proposition}
\newtheorem{corollary}[theorem]{Corollary}
\theoremstyle{definition}
\theoremstyle{remark}
\renewcommand{\geq}{\geqslant}
\renewcommand{\leq}{\leqslant}
\title{An $n$th-cousin mating model and the $n$-anacci numbers}
\author{Elisa Heinrich Mora\thanks{Department of Biology, Stanford University, Stanford, CA 94305 USA.}\hspace{0.2cm} and Noah A.~Rosenberg$^*$}
\begin{document}
\maketitle

\begin{abstract}
\noindent In seeking to understand the size of inbred pedigrees, J.~Lachance (\emph{J.~Theor.~Biol.~261, 238--247, 2009)} studied a population model in which, for a fixed value of $n$, each mating occurs between $n$th cousins. We explain a connection between the second-cousin case of the model ($n=2$) and the Fibonacci sequence, and more generally, between the $n$th-cousin case and the $n$-anacci sequence $(n \geq 2)$. For a model with $n$th-cousin mating $(n \geq 1)$, we obtain the generating function describing the size of the pedigree $t$ generations back from the present, and we use it to evaluate the asymptotic growth of the pedigree size. In particular, we show that the growth of the pedigree asymptotically follows the growth rate of the $n$-anacci sequence --- the golden ratio $\phi = (1 + \sqrt{5})/2 \approx 1.6180$ in the second-cousin case $n=2$ --- and approaches 2 as $n$ increases. The computations explain the appearance of familiar numerical sequences and constants in a pedigree model. They also recall similar appearances of such sequences and constants in studies of population biology more generally.
\end{abstract}


\section{Introduction}
\label{sec:intro}

In organisms in which individuals possess two parents, the biparental genealogy of an individual --- the pedigree --- can be treated as a mathematical structure. Pedigrees as mathematical structures provide the basis for a variety of applications in demography, genealogical studies, and genetics. For example, mathematical properties of pedigrees contribute to understanding the level of relationship of individuals in a population, the genealogical history of populations, the time to the most recent common ancestor of a set of individuals, and aspects of individual genotypes and phenotypes~\cite{AgranatTamirEtAl24:genetics, Chang99, EdgeAndCoop20, ErlichEtAl18, RohdeEtAl04, Thompson00, WakeleyEtAl12, WakeleyEtAl16}.

In one example of a mathematical study of pedigrees, Lachance \cite{lachance2009inbreeding} examined a discrete-time model in which, for a fixed value of $n$, each mating in a population occurs between $n$th cousins. Lachance considered an individual member of the population, counting the number of distinct genealogical ancestors that the individual possesses $t$ generations back in time. In generations $t=1, 2, \ldots, n+1$, the number of ancestors is $2^t$. However, because of the $n$th-cousin matings, starting in generation $t=n+2$, some ancestors are ancestors via multiple paths in the pedigree, so that \emph{distinct} genealogical ancestors number fewer than $2^t$.   

Lachance obtained a recurrence describing the number of distinct genealogical ancestors in generation $t$ in the $n$th-cousin model, evaluating the first terms of sequences in $t$ counting the numbers of ancestors for small values of $n$. For small $n$, Lachance also numerically computed the growth rates with $t$ of the numbers of genealogical ancestors, informally observing a numerical equivalence between these rates and the Fibonacci, tribonacci, tetranacci, and pentanacci growth constants for $n=2$, 3, 4, and 5, respectively.

Here, we provide mathematical explanations for Lachance’s observations. We show that the values produced by Lachance’s recurrence for the number of distinct genealogical ancestors in the $n$th-cousin model are equivalent to those obtained in a modification of the $n$-nacci recurrence. We derive the generating function for the number of distinct genealogical ancestors in the $n$th-cousin model, showing that it is closely related to the $n$-nacci generating function, sharing the same exponential growth constant $r_{n+1}$. Finally, we show that as $n$ increases, the constant $r_{n+1}$ that describes the asymptotic generation-by-generation growth of the number of distinct genealogical ancestors in generation $t$ increases monotonically from $\phi=(1+\sqrt{5})/2$ in the second-cousin model ($n=2$) toward 2 in the limit as $n \rightarrow \infty$. We discuss connections to classical results of population modeling that also produce similar appearances of the Fibonacci sequence, and of $n$-anacci sequences more generally.

\begin{figure}[htbp]
    \centering
    \begin{subfigure}[t]{0.4\textwidth}
        \begin{picture}(0,0)
            \put(0,120){\small \textbf{(a)}}
        \end{picture}
        \includegraphics[width=\textwidth]{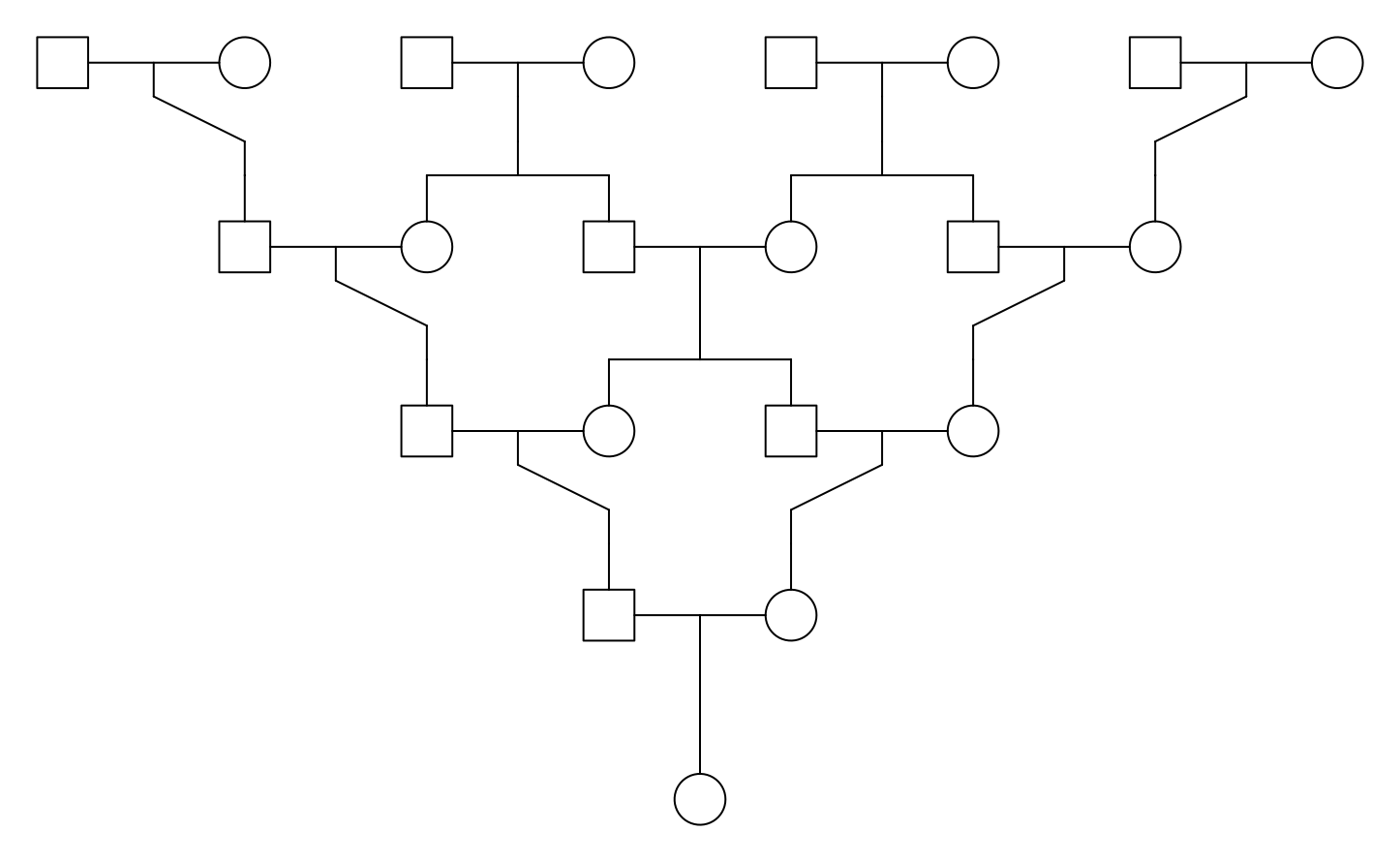}
    \end{subfigure}
    \hfill
    \begin{subfigure}[t]{0.4\textwidth}
        \begin{picture}(0,0)
            \put(0,120){\small \textbf{(b)}}
        \end{picture}
        \includegraphics[width=\textwidth]{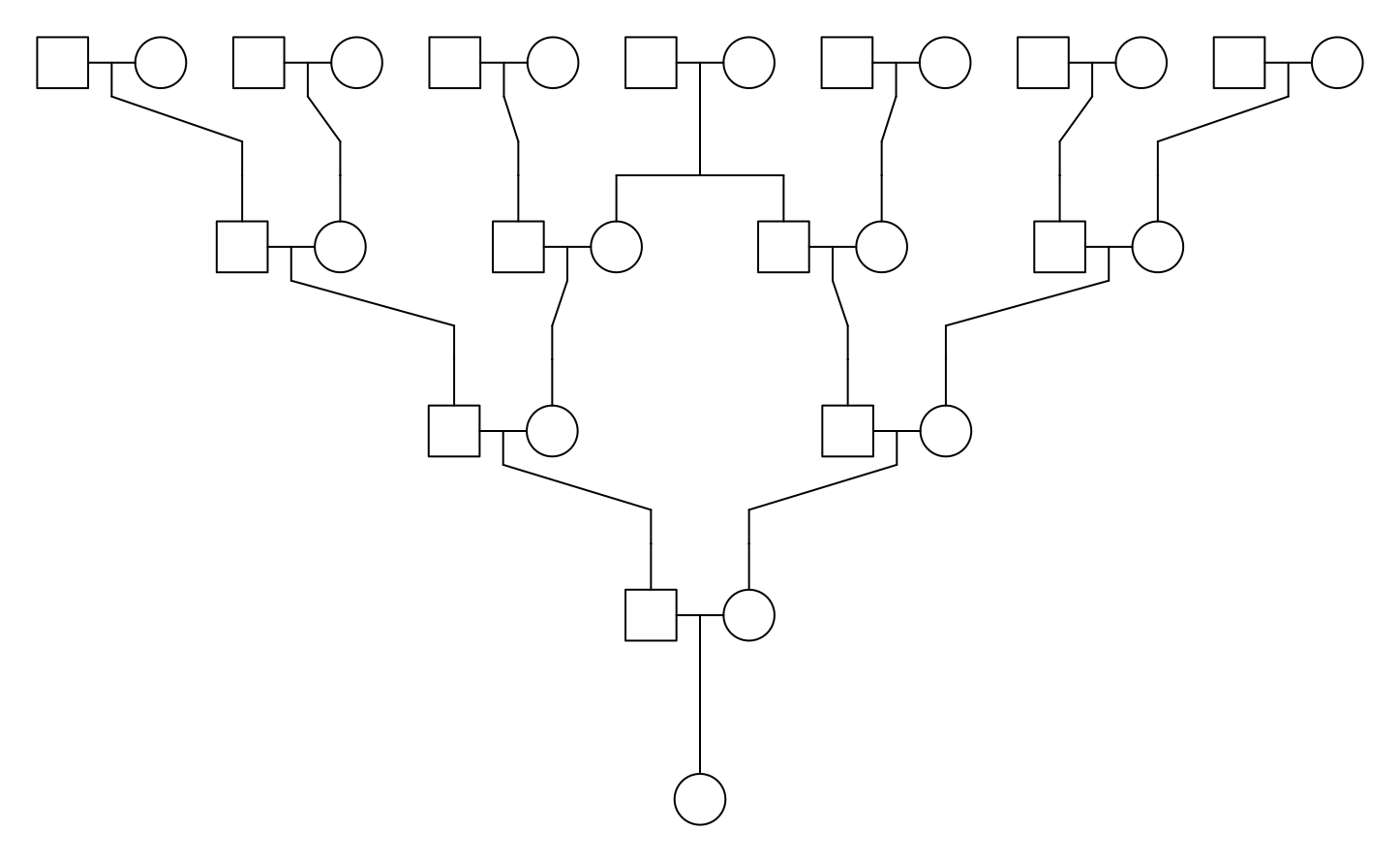}
    \end{subfigure}
    \vspace{.3cm}
    \caption{Pedigrees with $n$th-cousin mating, in which the members of each mating pair are $n$th cousins, possessing a pair of shared ancestors $g=n+1$ generations back in time. (a) First-cousin mating ($n=1$, $g=2$). (b) Second-cousin mating ($n=2$, $g=3$). Males are represented by squares, and females by circles.}
    \label{fig:cousin_inbreeding}
\end{figure}

\section{Model and preliminary results}

\subsection{Model}
\label{sec:model}

We use the discrete-generation $n$th-cousin mating model of Lachance~\cite{lachance2009inbreeding}. The model considers a large biparental population of equal sex ratio. Mating is nonrandom, consisting of cousin matings of a fixed type. In particular, each mating event involves $n$th-cousin mating for some specified value of $n \geq 1$, with $n=0$ corresponding to sib mating.

It is convenient to write $g=n+1$, where $g$ denotes the number of generations separating an $n$th-cousin pair from their most recent shared pair of ancestors; for example, a set of grandparents ($g=2$) for first cousins ($n=1$). We let $t$ denote the number of generations back in a pedigree from an index individual; $t=0$ is the index individual, $t=1$ is the parental generation, $t=2$ is the grandparental generation, and so on. We denote by $A_t$ the number of distinct ancestors represented in generation $t \geq 0$.

Figure \ref{fig:cousin_inbreeding} depicts pedigrees with $n$th-cousin mating for $n=1$ and $n=2$. Figure \ref{fig:pedigree_panels} illustrates the model in more detail in the case of first cousins: $n=1$, corresponding to $g=2$. We observe that the number of distinct ancestors in generation $t=2$ is $2^2=4$. In generation $t=3$, because of the first-cousin mating in generation $t=1$, the number of distinct ancestors is 6, strictly less than $2^3=8$.

\subsection{Preliminary results}
\label{pedigree_const} 

In the model in which each individual is the offspring of an $n$th-cousin mating pair $(n \geq 0)$, this section recapitulates the recurrence from Lachance~\cite{lachance2009inbreeding} that counts genealogical ancestors in the ancestral generation $t$ for an individual from generation 0. First, in an outbred population without $n$th-cousin mating, the number of ancestors doubles each generation, so that for $t \geq 1$,
\begin{equation}
A_t = 2A_{t-1}.
\end{equation}
With $A_0=1$, this recurrence has solution $A_t=2^t$.

In an inbred population with $n$th-cousin mating $(n \geq 0)$, the number of distinct ancestors is $2^t$ in the most recent generations but eventually decreases as certain ancestors begin to appear in the pedigree along multiple genealogical lines. By definition of $n$th-cousin mating, each mating pair shares a common ancestral pair $g=n+1$ generations prior. Starting at generation $t=0$ and moving backward, the first mating pair appears in generation $t=1$, so that the first \emph{duplicated} pair occurs in generation $t=g+1$. Subtracting a duplicated pair in the $g$th generation preceding a mating pair --- or one duplicated ancestor for each individual in the mating pair --- the recurrence takes the form~\cite[eq.~1]{lachance2009inbreeding}:
\begin{equation}
A_t = 
\begin{cases} 
    2^t, & \text{if } 0 \leq t \leq g, \\
    2A_{t-1} - A_{t-g}, & \text{if } t \geq g+1.
\end{cases}
\label{eq:recursion}
\end{equation}

\begin{figure}[ht]
     \centering
    \begin{subfigure}{0.7 \textwidth} 
        \centering
        \includegraphics[width=\textwidth]{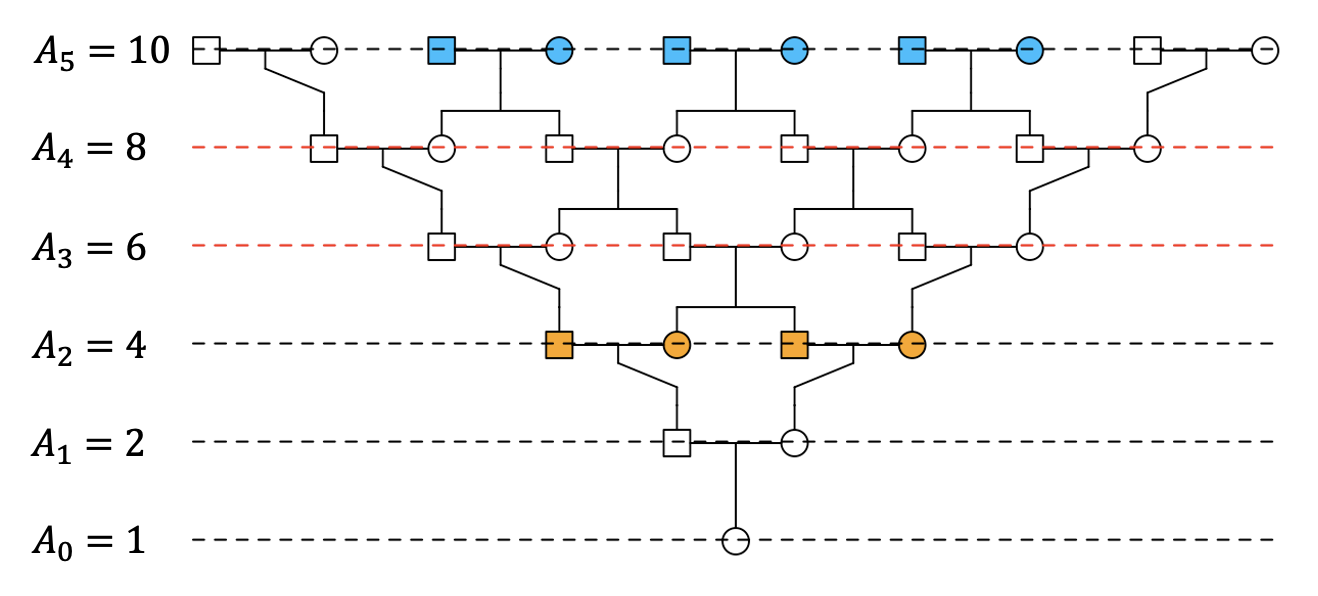}
        \label{fig:panel_c}
    \end{subfigure}
    \vspace{-.6cm}
    \caption{Illustration of notation for a pedigree with first-cousin mating ($n=1$, $g=2$). For a generation $t$, here $t=5$, inbred pairs in generation $t-3=2$ are indicated in yellow, shared ancestors of the members of the inbred pairs from generation $t-3$ are indicated in blue in generation $t$, and the intervening generations are indicated in red. The first value of $t$ back in time at which the number of distinct ancestors is less than $2^t$ is $t=n+2=g+1=3$, with $A_3=6$.}
    \label{fig:pedigree_panels}
\end{figure}

Lachance iterated eq.~\ref{eq:recursion} to obtain the number of ancestors $A_t$ for the smallest values of $t$, for $n$th-cousin mating with small values of $n$ (Table~\ref{tab:table1}). Lachance~\cite[eq.~3]{lachance2009inbreeding} also noted that for $n \geq 1$, the asymptotic growth rate of the number of ancestors, $r_g=\lim_{t \rightarrow \infty} (A_t/A_{t-1})$, can be obtained by finding the largest real root of the characteristic polynomial of eq.~\ref{eq:recursion}, 
\begin{equation}
\label{eq:characteristic}
    r^g = 2r^{g-1}-1.
\end{equation}
Lachance \cite[section 4.1]{lachance2009inbreeding} observed that for $n \geq 2$, the pedigree growth rates $r_g$ under different degrees of cousin mating, as seen in Table \ref{tab:table1}, matched those of corresponding $n$-anacci sequences ($g=n+1$). The growth rate $r_3$ is the growth rate of the Fibonacci sequence, $r_4$ is the growth rate of the tribonacci sequence, $r_5$ is the growth rate of the tetranacci sequence, and so on. In emphasizing other aspects of the pedigree model, Lachance~\cite{lachance2009inbreeding} did not further discuss the connections to $n$-anacci sequences; we supply explanations for those connections here.

\section{A Fibonacci-like recurrence} 
\label{rec_bt}

The Fibonacci sequence $\{f_t\}_{t \geq 0}$ is defined $f_0=0$, $f_1=1$, and $f_t = f_{t-1}+f_{t-2}$ for $t \geq 2$. The generalized Fibonacci sequence $\{f_t^{(n)}\}_{t \geq 0}$ of order $n$, $n \geq 2$, or the $n$-anacci sequence (or sometimes $n$-bonacci), has $f_0^{(n)}=0$, $f_1^{(n)}=1$, $f_t^{(n)}=2^{t-2}$ for $2 \leq t \leq n-1$, and $f_t^{(n)}=f_{t-1}^{(n)} + f_{t-2}^{(n)} + \ldots + f_{t-n}^{(n)}$ for $t \geq n$~\cite[p.~101]{CullEtAl05}. The Fibonacci sequence is thus the $n$-anacci sequence of order 2. With some lack of standardization in the indexing of the terms, the 3-, 4- and 5-anacci sequences --- the tribonacci, tetranacci, and pentanacci sequences, respectively --- appear in the On-Line Encyclopedia of Integer Sequences as sequences A000073, A000078, and A001591.

We now define a new sequence $\{B_t\}_{t \geq 0}$ with a recurrence reminiscent of Fibonacci and $n$-anacci recurrences. This sequence is defined purely in mathematical terms, without reference to the mating model. However, it includes a positive integer parameter $g=n+1$ separating two regimes. We will show an equivalence between $B_t$ and $A_t$, facilitating analysis of the properties of $A_t$.

For an integer $n \geq 0$ with $g = n+1$, define $\{B_t\}_{t \geq 0}$ as follows:
\begin{equation}
B_t = 
\begin{cases} 
2^t, & \text{if } 0 \leq t \leq g, \\
\left(\sum\limits_{i=1}^{n} B_{t-i}\right) + 2, & \text{if } t \geq g+1.
\end{cases}
\label{eq:Bt_recursion}
\end{equation}
The value of $n$ represents the number of terms summed in $B_t$ in the recursive regime, which begins at step $g+1=n+2$. The case of $n=0$ is trivial, with $B_0=1$ and $B_t=2$ for all $t \geq 1$. 

\begin{prop}
\label{prop:AequalsB}
Let \( g \geq 2 \). Define the sequences \( \{A_t\}_{t \geq 0} \) and \( \{B_t\}_{t \geq 0} \) as follows:
\begin{align*}
A_t &= 
\begin{cases}
2^t, & \text{if } 0 \leq t \leq g, \\
2A_{t-1} - A_{t-g}, & \text{if } t \geq g+1,
\end{cases} \\
B_t &= 
\begin{cases}
2^t, & \text{if } 0 \leq t \leq g, \\
\bigg( \sum \limits_{i=1}^{g-1} B_{t-i} \bigg)+ 2, & \text{if } t \geq g+1.
\end{cases}
\end{align*}
Then for all \( t \geq 0 \), we have \( A_t = B_t \).
\end{prop}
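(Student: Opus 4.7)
The plan is to prove $A_t = B_t$ by strong induction on $t$, exploiting the observation that the $B$ recurrence is secretly a telescoped form of the two-term $A$ recurrence.

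For the base cases $0 \leq t \leq g$, both sequences equal $2^t$ by definition. I would then separately verify the boundary case $t = g+1$ by direct computation: the $A$ recurrence gives $A_{g+1} = 2 \cdot 2^g - 2 = 2^{g+1}-2$, while summing the geometric series in the $B$ recurrence gives $B_{g+1} = \sum_{i=2}^{g} 2^i + 2 = 2^{g+1} - 2$, so they agree.

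The main work lies in the range $t \geq g+2$, where I would show that the $B$ recurrence itself implies the two-term recurrence satisfied by $A_t$. For such $t$ both $B_t = \sum_{i=1}^{g-1} B_{t-i} + 2$ and $B_{t-1} = \sum_{i=1}^{g-1} B_{t-1-i} + 2$ hold; subtracting these two identities causes the inner sums to telescope, with every $B_{t-i}$ for $2 \leq i \leq g-1$ cancelling, and leaving
\[
B_t - B_{t-1} = B_{t-1} - B_{t-g},
\]
i.e., $B_t = 2B_{t-1} - B_{t-g}$. This is exactly the recurrence defining $A_t$. Since both sequences therefore satisfy the linear recurrence $X_t = 2X_{t-1} - X_{t-g}$ for $t \geq g+2$ and agree on the initial values for $0 \leq t \leq g+1$, strong induction immediately yields $A_t = B_t$ for all $t \geq 0$.

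The main obstacle, and the reason the case $t = g+1$ needs to be checked by hand, is the asymmetry in the ranges of validity of the two recurrences: the $A$ recurrence kicks in at $t = g+1$, whereas the telescoping derivation of the two-term recurrence for $B$ requires both $t$ and $t-1$ to lie in the $B$ recursive regime, i.e., $t-1 \geq g+1$. Once this single boundary case is dispatched, the induction proceeds without further complication.
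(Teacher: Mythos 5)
Your proof is correct. The paper proves the same result by strong induction directly on the equality $A_t=B_t$: after the same hand check at $t=g+1$ (both obtaining $2^{g+1}-2$), it expands $A_{g+x+1}=2A_{g+x}-A_{x+1}$ and $B_{g+x+1}$ in terms of earlier values via the inductive hypothesis and shows that their difference collapses to $\bigl(\sum_{i=1}^{g-1}B_{x+g-i}\bigr)+2-B_{x+g}$, which vanishes by the definition of $B$. The algebraic heart of that computation is exactly the telescoping you isolate: subtracting consecutive instances of the $B$ recurrence to get $B_t-B_{t-1}=B_{t-1}-B_{t-g}$, i.e., $B_t=2B_{t-1}-B_{t-g}$. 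By extracting this as a standalone identity valid for $t\geq g+2$ and then invoking the fact that two sequences satisfying the same linear recurrence and agreeing on a sufficiently long initial segment must coincide, you obtain a more modular and arguably cleaner argument; the paper's version interleaves the same cancellation with the induction, which makes the bookkeeping (reindexing sums such as $\sum_{i=1}^{g-1}A_{g+x-i}=\sum_{i=1}^{g-1}A_{x+i}$) heavier. Your identification of $t=g+1$ as the single boundary case requiring separate verification --- because the telescoping needs both $t$ and $t-1$ to lie in the recursive regime of $B$ --- is exactly the right point to flag, and the induction closes correctly since for $t\geq g+2$ the recurrence only consults indices $t-1$ and $t-g$, both already covered.
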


\begin{table}[tb]
\centering
\begin{tabular}{|c|c|c|c|c|c|c|c|}
\hline
      &          & \multicolumn{6}{|c|}{Cousin mating level ($n$)} \\ \hline
$t$   & Outbred ($n=\infty$) & $n=0$ & $n=1$ & $n=2$ & $n=3$ & $n=4$ & $n=5$ \\ \hline
1     & 2        &  2 & 2 &    2 &   2 &   2 &   2 \\ 
2     & 4        &  2 & 4 &    4 &   4 &   4 &   4 \\ 
3     & 8        &  2 & 6 &    8 &   8 &   8 &   8 \\ 
4     & 16       &  2 & 8 &   14 &  16 &  16 &  16 \\ 
5     & 32       &  2 & 10 &  24 &  30 &  32 &  32 \\
6     & 64       &  2 & 12 &  40 &  56 &  62 &  64 \\
7     & 128      &  2 & 14 &  66 & 104 & 120 & 126 \\
8     & 256      &  2 & 16 & 108 & 192 & 232 & 248 \\
9     & 512      &  2 & 18 & 176 & 354 & 448 & 488 \\
10    & 1024     &  2 & 20 & 286 & 652 & 864 & 960 \\ \hline
$g$   & -        &  1 & 2 &   3 &   4 &   5 &   6 \\ \hline 
$r_g$ & 2        &  1 & 1 & 1.6180 & 1.8393 & 1.9276 & 1.9659 \\ \hline
\end{tabular}
\caption{Number of distinct genealogical ancestors as a function of the number of generations $t$ back in time from an index individual, under different levels of $n$th-cousin mating. With $g=n+1$, values of $r_g$, the growth rate of the number of ancestors with time, are obtained numerically by solving eq.~\ref{eq:characteristic}. The table is adapted from Table 1 of \cite{lachance2009inbreeding}.}
\label{tab:table1}
\end{table}

\begin{proof}
The claim is trivial for $0 \leq t \leq g$, as both sequences are explicitly defined $A_t = B_t = 2^t$ (eqs.~\ref{eq:recursion} and \ref{eq:Bt_recursion}). For $t \geq g+1$, we use induction. For the base case, $t=g+1$, we have 
\begin{align}
    A_{g+1} & = 2A_g-A_1 = 2(2^g -1), \nonumber \\
    B_{g+1} & =\bigg( \sum_{i=1}^{g-1}
    B_{g+1-i} \bigg) + 2 = \bigg(\sum_{i=1}^{g-1} 2^{g+1-i} \bigg) + 2 = \sum_{i=1}^g 2^g = 2(2^g-1). \nonumber
    \end{align}
For the inductive step, suppose $A_t = B_t$ for all $t$, $0 \leq t \leq g+x$, for some $x \geq 1$. We show $A_{g+x+1}=B_{g+x+1}$.

From the definition of $A_t$ (eq.~\ref{eq:recursion}), we have:
$$A_{g+x+1} = 2A_{g+x} - A_{x+1}.$$
From the inductive hypothesis and the definition of $B_t$ (eq.~\ref{eq:Bt_recursion}), we have
\begin{align}
A_{g+x+1} &= 2B_{g+x} - B_{x+1} \nonumber \\
&= 2\bigg[ \bigg( \sum_{i=1}^{g-1} B_{g+x-i} \bigg) + 2 \bigg] - B_{x+1} \nonumber \\
&= 2\bigg[ \bigg( \sum_{i=1}^{g-1} A_{g+x-i} \bigg) + 2 \bigg]- A_{x+1} 
\nonumber \\
&= 2 \bigg[ \bigg( \sum_{i=1}^{g-1} A_{x+i} \bigg) + 2 \bigg] - A_{x+1}. \nonumber
\end{align}
From the definition of $B_t$ (eq.~\ref{eq:Bt_recursion}) and the inductive hypothesis, 
\begin{align}
B_{g+x+1} &= \bigg( \sum_{i=1}^{g-1} B_{g+x+1-i} \bigg) + 2 \nonumber  \\
&= \bigg( \sum_{i=1}^{g-1} A_{g+x+1-i} \bigg) + 2 \nonumber \\
&= \bigg( \sum_{i=2}^g A_{x+i} \bigg) + 2. \nonumber 
\end{align}
Simplifying and using the inductive hypothesis, we have
\begin{align}
A_{g+x+1}-B_{g+x+1} &= \bigg( 2\sum_{i=1}^{g-1} A_{x+i} \bigg) + 4 - A_{x+1}  - \bigg( \sum_{i=2}^g A_{x+i} \bigg) - 2 \nonumber \\
&= \bigg( \sum_{i=2}^{g-1} A_{x+i} \bigg) + 2A_{x+1} + 2 -A_{x+1} - A_{x+g}\nonumber \\
&= \nonumber \bigg[ \bigg( \sum_{i=1}^{g-1} A_{x+i} \bigg) + 2 \bigg]  - A_{x+g} \\
&= \bigg[ \bigg( \sum_{i=1}^{g-1} B_{x+i} \bigg) + 2 \bigg]  - B_{x+g} \nonumber \\
&= \bigg[ \bigg( \sum_{i=1}^{g-1} B_{x+g-i} \bigg) + 2 \bigg]  - B_{x+g}.
\label{eq:laststep}
\end{align} 
Eq.~\ref{eq:laststep} is equal to 0 by the definition of $B_t$ (eq.~\ref{eq:Bt_recursion}), completing the proof.
\end{proof}

We immediately obtain a corollary that in the second-cousin model ($n=2$, $g=3$), the coefficients $A_t$ are closely related to the Fibonacci sequence. 
\begin{corollary}
The sequence $\{A_t\}_{t \geq 1}$ in the case of $g=3$ satisfies $A_t = 2f_{t+2}-2$, where $\{f_t\}_{t \geq 0}$ is the Fibonacci sequence $f_0=0$, $f_1=1$, and $f_t=f_{t-1}+f_{t-2}$ for $t \geq 2$.  
\end{corollary}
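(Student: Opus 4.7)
The plan is to reduce everything to the sequence $B_t$ via Proposition \ref{prop:AequalsB}, so that the statement becomes a routine verification that a closed-form Fibonacci expression satisfies the same recurrence and initial conditions as $B_t$ in the special case $g=3$.

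Specifically, with $g=3$, Proposition \ref{prop:AequalsB} gives $A_t = B_t$, where $B_t$ satisfies $B_1=2$, $B_2=4$, $B_3=8$, and $B_t = B_{t-1} + B_{t-2} + 2$ for $t \geq 4$. So it suffices to show $B_t = 2f_{t+2} - 2$ for all $t \geq 1$.

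First I would check the three base cases $t=1,2,3$: using $f_2=1,f_3=2,f_4=3,f_5=5$, one has $2f_3-2=2$, $2f_4-2=4$, $2f_5-2=8$, matching $B_1,B_2,B_3$. Then, writing $C_t := 2f_{t+2} - 2$, I would verify that $C_t$ satisfies the same two-term-plus-constant recurrence as $B_t$: by the Fibonacci recurrence $f_{t+2}=f_{t+1}+f_t$,
\begin{equation*}
C_{t-1} + C_{t-2} + 2 = (2f_{t+1}-2) + (2f_t - 2) + 2 = 2(f_{t+1}+f_t) - 2 = 2f_{t+2} - 2 = C_t.
\end{equation*}
Since $B_t$ and $C_t$ agree on $t=1,2,3$ and satisfy the same recurrence for $t\geq 4$, an easy induction gives $B_t = C_t$ for all $t \geq 1$, and combining with $A_t = B_t$ completes the proof.

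There is no real obstacle here; the only subtlety worth flagging is a minor indexing sanity check to confirm that the formula is stated for $t\geq 1$ rather than $t\geq 0$ (indeed $A_0=1$ but $2f_2-2=0$, so the $t=0$ case is correctly excluded from the corollary's range).
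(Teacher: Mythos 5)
Your proof is correct and follows essentially the same route as the paper's: both reduce to the recurrence $A_t = A_{t-1}+A_{t-2}+2$ via Proposition \ref{prop:AequalsB} and then check by induction that $2f_{t+2}-2$ satisfies the same recurrence and initial values. Your remark about excluding $t=0$ is a nice sanity check but not needed.
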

\begin{proof}
The result holds for $t=1$, 2, and 3, with $(f_3, f_4, f_5) = (2, 3, 5)$ and $A_1=2f_3-2=2$, $A_2=2f_4-2=4$, and $A_3=2f_5-2=8$.

The base case for induction is $t=4$, for which $f_6=8$ and $A_4=2f_6-2=14$. For the inductive step, suppose $A_{3+x}=2f_{5+x}-2$ for all positive $x\leq c$. Then $A_{3+(x+1)}=A_{2+(x+1)}+A_{1+(x+1)}+2$ by the equivalence of $A_t$ and $B_t$ and the definition of $B_t$ (Proposition \ref{prop:AequalsB}), so that by the inductive hypothesis, $A_{3+(x+1)}=[2f_{4+(x+1)} -2] + [2f_{3+(x+1)} - 2] + 2 = 2f_{5+(x+1)} - 2$, completing the proof.
\end{proof}

\section{Generating function for \texorpdfstring{$A_t$}{At}}

We use the equivalence between $A_t$ and $B_t$ to analyze the growth rate $r_g$ of the number of ancestors in the $n$th-cousin mating model, $g=n+1$, with $n \geq 1$. To do so, we derive the generating function for $\{A_t\}_{t \geq 0}$.

\begin{prop}
Write $G_g(z) = \sum_{t=0}^\infty A_t z^t$ for the generating function for the sequence $\{A_t\}_{t \geq 0}$ with parameter $g$. Then
\begin{equation*}
G_g(z) = \frac{1 + z^g}{1 - 2z + z^g}.
\end{equation*}
\end{prop}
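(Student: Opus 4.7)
The plan is to apply the standard generating-function technique directly to the recurrence for $A_t$ in eq.~\ref{eq:recursion}. I would not attempt to use the alternative recurrence for $B_t$ from Proposition~\ref{prop:AequalsB}; although Proposition~\ref{prop:AequalsB} is established, the recurrence $A_t = 2A_{t-1} - A_{t-g}$ is already a two-term linear recurrence with constant coefficients, which is exactly the form to which the generating-function machinery applies most cleanly.

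First, I would split $G_g(z)$ into its initial part and its recursive part:
\begin{equation*}
G_g(z) = \sum_{t=0}^{g} 2^t z^t + \sum_{t=g+1}^{\infty} A_t z^t.
\end{equation*}
Next I would substitute the recurrence $A_t = 2A_{t-1} - A_{t-g}$ into the second sum and re-index both resulting sums so that their general terms are $A_s z^s$:
\begin{equation*}
\sum_{t=g+1}^{\infty} A_t z^t = 2z \sum_{s=g}^{\infty} A_s z^s - z^g \sum_{s=1}^{\infty} A_s z^s.
\end{equation*}
Each of these tail sums can be written as $G_g(z)$ minus a finite number of initial terms, which are explicitly $A_s = 2^s$ for $s \leq g$.

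After collecting the $G_g(z)$ contributions on the left-hand side, I would obtain an identity of the form
\begin{equation*}
(1 - 2z + z^g)\, G_g(z) = \sum_{t=0}^{g} 2^t z^t - 2z \sum_{s=0}^{g-1} 2^s z^s + z^g.
\end{equation*}
The step I expect to be the only nontrivial one is verifying that the right-hand side collapses to $1 + z^g$. This is a telescoping computation: re-indexing the middle sum via $t = s+1$ gives $\sum_{t=1}^{g} 2^t z^t$, which cancels all terms of $\sum_{t=0}^{g} 2^t z^t$ except the $t=0$ term (equal to $1$) and the $t=g$ term (which combines with the extra $z^g$ so that the $2^g z^g$ contributions cancel out). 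Dividing by $1 - 2z + z^g$ yields the claimed formula
\begin{equation*}
G_g(z) = \frac{1 + z^g}{1 - 2z + z^g}.
\end{equation*}
Finally, one should note that this manipulation is valid as a formal power series identity, so no convergence considerations are needed.
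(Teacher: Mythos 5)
Your proposal is correct and follows essentially the same route as the paper's proof: split off the initial segment $\sum_{t=0}^{g} 2^t z^t$, substitute the recurrence into the tail, re-index, express the shifted tails in terms of $G_g(z)$, and solve; the paper merely writes the geometric sums in closed form $[1-(2z)^{g+1}]/(1-2z)$ and leaves the final simplification implicit, whereas you carry out the telescoping explicitly (and correctly: the right-hand side does collapse to $1+z^g$). No gap.
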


\begin{proof}
Recalling from eq.~\ref{eq:recursion} that $A_t = 2^t$ for $0 \leq t \leq g$, we have $G_g(z) = \sum_{t=0}^g 2^t z^t + \sum_{t=g+1}^\infty A_t z^g$. The first term is $\sum_{t=0}^g 2^t z^t = [1-(2z)^{g+1}]/(1-2z)$.

For the second term, following the recurrence in eq.~\ref{eq:recursion} for $t \geq g+1$, we have
\begin{align}
\sum_{t=g+1}^\infty A_t z^t &= 2 \sum_{t=g+1}^\infty A_{t-1} z^t - \sum_{t=g+1}^\infty A_{t-g} z^t. \nonumber \\
&= 2z \sum_{t=g}^\infty A_t z^t - z^g \sum_{t=1}^\infty A_t z^t. \nonumber
\end{align}
We substitute $\sum_{t=g}^\infty A_t z^t = G_g(z) - [1-(2z)^{g+1}]/(1-2z) + 2^g z^g$ and $\sum_{t=1}^\infty A_t z^t = G_g(z)-1$, obtaining
\begin{align*}
\sum_{t=g+1}^\infty A_t z^t &= 2z \bigg[G_g(z) - \frac{1-(2z)^{g+1}}{1-2z} + 2^g z^g\bigg] - z^g \big[ G(z) - 1 \big]. \nonumber
\end{align*}
Adding $\sum_{t=0}^g 2^t z^t$, we have 
\begin{align*}
G_g(z) = \sum_{t=0}^\infty A_t z^t &= \frac{1-(2z)^{g+1}}{1-2z} + 2z \bigg[G_g(z) - \frac{1-(2z)^{g+1}}{1-2z} + 2^g z^g\bigg] - z^g \big[ G_g(z) - 1 \big].
\end{align*}
Solving for $G_g(z)$ and simplifying yields the result.
\end{proof}

The generating function $G_g(z)$ can be written 
\begin{equation}
\label{eq:gf}
    G_g(z) = \frac{1+z^g}{z(1-z)} F_n(z),
\end{equation}
where $n=g-1$ and 
\begin{equation}
\label{eq:fn}
    F_n(z) = \frac{z}{1 - z - z^2 - \ldots - z^n}.
\end{equation}
$F_n(z)$ is the generating function for an $n$-anacci sequence~\cite[p.~270]{Knuth98}. For example, $F_2(z) = z/(1-z-z^2)$ is the generating function for the classic Fibonacci (2-anacci) sequence $f_n$~\cite[p.~297]{GrahamEtAl94}. The generating function $G_3(z)$ is the product of $(1+z^3)/[z(1-z)]$ and the Fibonacci generating function. That $G_g(z)$ can be written as a product of an $n$-anacci generating function and a second term $(1+z^g)/[z(1-z)]$ explains why the growth rate $r_g$ follows the growth rate of the $n$-anacci sequence. 

\section{Asymptotic analysis of pedigree growth}
For $n \geq 1$, to find the asymptotic growth of the coefficients $A_t$ in the $n$th-cousin mating model, $r_g = \lim_{t \rightarrow \infty} (A_{t}/A_{t-1})$, we find the singularity nearest the origin~\cite[section 5.2]{Wilf94} of generating function $G_g(z)$ in eq.~\ref{eq:gf} (and hence of $F_n(z)$ in eq.~\ref{eq:fn}). This singularity occurs at $z=1/r_g$. For first cousins ($g=2$), $r_2=1$, and for second cousins ($g=3$), $r_3$ is the familiar $(1 + \sqrt{5})/2$. 

Table \ref{tab:table1} reports numerical solutions for $r_g$ for $g=4, 5, 6$. We can observe informally that as $n$ increases, $r_g=r_{n+1}$ increases as well. We establish formally that the singularity of $F_n(z)$ of smallest modulus decreases toward $\frac{1}{2}$ as $n$ increases, so that the generation-wise growth rates $r_g$ increase toward 2.
\begin{prop}
\label{prop:singularity}
   For $n \geq 1$, denote by $c_n$ the singularity of $F_n(z)$ of smallest modulus. Then (i) $c_{n+1} < c_n$, and (ii) $\lim_{n \rightarrow \infty} c_n = \frac{1}{2}$. 
\end{prop}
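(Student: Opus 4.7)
The plan is to work directly with the denominator polynomial $P_n(z) := 1 - z - z^2 - \cdots - z^n$ of $F_n$, together with its $(1-z)$-multiple
$$Q_n(z) := (1-z)\,P_n(z) = 1 - 2z + z^{n+1},$$
which is eq.~\ref{eq:characteristic} rewritten in the reciprocal variable $z=1/r$. First I would observe that $P_n$ is strictly decreasing on $[0,1]$ with $P_n(0)=1$ and $P_n(1)=1-n\leq 0$ for $n\geq 1$, so it has a unique real root in $(0,1]$, which I will call $\tilde c_n$. A short triangle-inequality argument applied to the identity $1 = w + w^2 + \cdots + w^n$ at any other zero $w$ of $P_n$ shows that $|w|\geq \tilde c_n$, with equality only when $w=\tilde c_n$; hence $\tilde c_n = c_n$, and the singularity of $F_n$ of smallest modulus is genuinely this positive real root.

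For part (i), I would exploit the telescoping identity $P_{n+1}(z) = P_n(z) - z^{n+1}$. Evaluating at $z=c_n$ gives $P_{n+1}(c_n) = -c_n^{n+1} < 0$, while $P_{n+1}(0) = 1 > 0$. The intermediate value theorem then produces a zero of $P_{n+1}$ in $(0,c_n)$; by the uniqueness just established, this zero must be $c_{n+1}$, so $c_{n+1} < c_n$.

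For part (ii), I would switch to $Q_n$, whose zeros are those of $P_n$ together with $z=1$. A direct computation yields $P_n(\tfrac{1}{2}) = 2^{-n} > 0$, so the monotonicity of $P_n$ forces $c_n > \tfrac{1}{2}$. For the matching upper bound, fix $\epsilon \in (0,\tfrac{1}{2})$ and compute
$$Q_n(\tfrac{1}{2}+\epsilon) = -2\epsilon + (\tfrac{1}{2}+\epsilon)^{n+1}.$$
Because $\tfrac{1}{2}+\epsilon < 1$, the second term tends to $0$, so $Q_n(\tfrac{1}{2}+\epsilon) < 0$ once $n$ is sufficiently large. Combined with $Q_n(\tfrac{1}{2}) = 2^{-(n+1)} > 0$, the IVT yields a zero of $Q_n$ in $(\tfrac{1}{2},\tfrac{1}{2}+\epsilon)$, necessarily distinct from $1$; being a positive real zero of $P_n$ inside $(0,1)$, it equals $c_n$. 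Thus $c_n < \tfrac{1}{2}+\epsilon$ for all large $n$, proving $\lim_{n\to\infty} c_n = \tfrac{1}{2}$.

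The main obstacle I anticipate is bookkeeping rather than real analytic difficulty: $Q_n$ always carries the spurious root $z=1$ arising from the factor $(1-z)$, so one must confine every application of the IVT to the interval $(0,1)$ to ensure that the root located is actually $c_n$ rather than something tending to $1$. Once that is done, both parts reduce to elementary sign checks and monotonicity.
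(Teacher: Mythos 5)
Your proof is correct, and its skeleton --- sign checks combined with the intermediate value theorem applied to the denominator of $F_n(z)$ --- is the same as the paper's. Two points of divergence are worth noting. First, you explicitly verify, via the triangle inequality applied to $1 = w + w^2 + \cdots + w^n$ at an arbitrary zero $w$, that the singularity of smallest modulus really is the unique positive real root of the denominator in $(0,1]$; the paper takes this identification for granted, so your lemma closes a small gap that a careful reader would want filled (and it is also what licenses your clean version of part (i), which finds the root of the order-$(n+1)$ denominator directly in $(0,c_n)$ from the signs at $0$ and $c_n$, rather than the paper's route through $(\frac{1}{2},1)$ and monotonicity there). Second, for the upper bound in part (ii) you pass to the collapsed polynomial $(1-z)(1-z-\cdots-z^n) = 1-2z+z^{n+1}$, which turns the required estimate into the one-line computation $-2\epsilon + (\frac{1}{2}+\epsilon)^{n+1} < 0$ for large $n$; the paper instead expands $1-\sum_{k=1}^{n}(\frac{1}{2}+\epsilon)^k$ as $2^{-n}$ minus a polynomial in $\epsilon$ with positive coefficients and no constant term, bounds that polynomial below by $\epsilon$, and extracts the explicit threshold $n > -\log_2 \epsilon$. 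Your version is tidier; the paper's gives a concrete rate. Your worry about the spurious root $z=1$ introduced by the factor $1-z$ is exactly the right bookkeeping point, and you handle it correctly by confining the intermediate value theorem to $(\frac{1}{2}, \frac{1}{2}+\epsilon) \subset (0,1)$.
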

\begin{proof}
(i) First, $c_1=1$. For $n \geq 2$, let $Q_n(z)$ denote the denominator of $F_n(z)$, $Q_n(z)=1-z-z^2-\ldots-z^n$. Then $Q_{n+1}(c_n)=1 - c_n - c_n^2 - \ldots - c_n^n - c_n^{n+1} = -c_n^{n+1}$, as $c_n$ is a root of $Q_n(z)$. We have $Q_{n+1}(\frac{1}{2}) > 0$ and $Q_{n+1}(1)<0$, so that by the intermediate value theorem, a root for $Q_{n+1}$ must exist in $(\frac{1}{2},1)$. $Q_{n+1}(z)$ decreases monotonically in $(\frac{1}{2},1)$. Because $Q_{n+1}(c_n)<0$, a root for $Q_{n+1}$ must occur in $(\frac{1}{2}, c_n)$. 

(ii) To show $\lim_{n \rightarrow \infty} c_n = \frac{1}{2}$, choose $\epsilon$ with $0 < \epsilon < \frac{1}{2}$. We show that there exists $n$ such that $c_n < \frac{1}{2} + \epsilon$. First, $Q_n(\frac{1}{2}) = 2^{-n} > 0$. Next, $Q_n(\frac{1}{2} + \epsilon) = 1 - (\frac{1}{2} + \epsilon) - (\frac{1}{2} + \epsilon)^2 - \ldots - (\frac{1}{2} + \epsilon)^n = 2^{-n} - P_n(\epsilon)$, where $P_n(\epsilon)$ is a polynomial in $\epsilon$ with positive coefficients and no constant term. In particular, $P_n(\epsilon) > \epsilon$ for the permissible $\epsilon$ values, $0 < \epsilon < \frac{1}{2}$.

Then $Q_n(\frac{1}{2} + \epsilon) = 2^{-n} - P_n(\epsilon) < 2^{-n} - \epsilon$. Choose $n > -\log_2 \epsilon$, so that $Q_n(\frac{1}{2} + \epsilon) < 2^{-n} - \epsilon < 0$. Because $Q_n(\frac{1}{2}) > 0$ and $Q_n(\frac{1}{2}+\epsilon) < 0$, using the intermediate value theorem, we have shown that given $\epsilon$, for sufficiently large $n$, the root $c_n$ lies in $(\frac{1}{2}, \frac{1}{2}+\epsilon)$. We then take $\epsilon$ to 0. 
\end{proof}

\section{Discussion}

Previously, Lachance~\cite{lachance2009inbreeding} had made a number of informal observations about the growth of pedigree sizes in a model in which each individual is the offspring of an $n$th-cousin mating pair, for a fixed $n \geq 1$. We have formalized several of these findings. First, Lachance~\cite{lachance2009inbreeding} noticed that for $n \geq 2$ and $g=n+1$, growth rates $r_g$ corresponded to the growth rates of associated $n$-anacci sequences. To explain this observation, we have shown an equivalence between the recurrence $A_t$ counting the number of ancestors in a pedigree under the $n$th-cousin mating model and a recurrence $B_t$ that resembles the $n$-anacci recurrence (Proposition~\ref{prop:AequalsB}) and has its same growth rate. We have also proven that the growth rate $r_g$ approaches 2 as $n \rightarrow \infty$ (Proposition \ref{prop:singularity}). In obtaining these results, we have reported a generating function for the number of ancestors $A_t$ (eq.~\ref{eq:gf}). 

Our mathematical analysis is similar to previous analyses of $n$-anacci sequences. An equation similar to eq.~\ref{eq:recursion} for tabulating $n$-anacci sequences in a recurrence containing only two of the terms appears in~\cite[p.~2]{NoeAndVosPost05}. The limit of 2 for the constants $r_g$ (Proposition \ref{prop:singularity}) is studied by~\cite{Flores67}. The setting here contains specific transformations of $n$-anacci sequences; the transformations of the Fibonacci and tribonacci sequences in Table \ref{tab:table1} appear in the On-Line Encyclopedia of Integer Sequences as sequences A019274 and A054668, respectively. 

Beyond formalizing Lachance's observations~\cite{lachance2009inbreeding}, the study adds to the mathematical connections between kinship systems and Fibonacci-like integer sequences. The Fibonacci sequence famously traces to a problem of population biology~\cite{Bacaer11}: the evaluation of the growth of a rabbit population under specified mating rules. It arises in studies of the number of ancestors that contribute to X chromosomes~\cite{Basin63, BuffaloEtAl16}. It also arises in certain problems in the analysis of heterozygosity in animal and plant breeding models that trace to the earliest years of the study of genetics~\cite{Crow87, Rosenberg16}. That the second-cousin mating model produces Fibonacci counts for the number of ancestors in pedigrees adds to the scenarios in which the Fibonacci sequence appears in counting problems involving pedigrees and population growth. 

Less well known is that a \emph{tribonacci} sequence like that of the third-cousin model (Table \ref{tab:table1}) has arisen in evaluating population size in a problem of historic interest, namely Darwin's problem of the population growth of elephants. In a passage of the \emph{Origin of Species}, Darwin describes a hypothetical model for elephant population growth; later correspondence suggests that Darwin's assumptions give rise to a tribonacci model for the growth~\cite{Podani2018}. To our knowledge, the third-cousin ancestor count is a rare example of a biological problem for which a sequence related to the tribonacci sequence provides the solution. 

We have focused here on specific forms of $n$th-cousin mating. Other systems of potential interest could include matings between members of different generations, and an example provided by Lachance~\cite{lachance2009inbreeding} illustrates such a scenario, with Fibonacci growth. We expect that inter-generational systems of mating can be analyzed in a manner similar to that considered here.


\vskip .5cm \noindent {\bf Acknowledgments.} We acknowledge grant support from National Science Foundation grant BCS-2116322.

{\small
\bibliographystyle{plain}
\bibliography{references}
}\end{document}